%
\documentclass[11pt,a4paper]{article}
\usepackage{amsmath,amssymb,amsfonts,amsthm}
\usepackage[pdftex]{graphicx, color}

\newcommand{\E}{{\bf{E}}}

\newcommand{\PP}{{\bf{P}}}
\newcommand{\Var}{{\bf{Var}}}

\setlength{\oddsidemargin}{0cm}
\setlength{\evensidemargin}{0cm}
\setlength{\textwidth}{16cm}
\setlength{\topmargin}{-1.5cm}
\setlength{\textheight}{23cm}

\newtheorem{lem}{Lemma}

\long\def\symbolfootnote[#1]#2
{\begingroup%
\def\thefootnote{\fnsymbol{footnote}}\footnote[#1]{#2}\endgroup}

\begin{document}

\bibliographystyle{plain}
\parindent=0pt
\centerline{\LARGE \bfseries Preferred attachment model}

\medskip

\centerline{\LARGE \bfseries of affiliation network}

\bigskip

\centerline{ Mindaugas Bloznelis \ \ and \ \  Friedrich G\"otze}

\bigskip

\bigskip

\quad  \qquad  \qquad      \qquad  \qquad      \qquad   Vilnius University    \qquad      \qquad        Bielefeld University
    
\quad  \qquad  \qquad      \qquad  \qquad      \qquad   LT-03225 Vilnius     \qquad      \qquad     \ \    D-33501 Bielefeld    

\quad  \qquad  \qquad      \qquad  \qquad      \qquad   Lithuania     \qquad      \qquad   \qquad    \qquad   Germany

\par\vskip 3.5em

\begin{abstract}
In an affiliation network 
 vertices are linked to attributes  and two vertices are declared adjacent 
whenever they share a common attribute. For example, two customers of an internet shop
are called adjacent if they have bought the same or similar items. 
Assuming that each
newly arrived customer is linked preferentially to already popular 
items we obtain
a preferred attachment model of an evolving affiliation network.
We show that the network 
has a scale-free property and establish the asymptotic degree distribution. 
\par\end{abstract}

\section{Introduction and results}

A preferential attachment model of evolving network assumes that each newly arrived vertex
is attached preferentially to already well connected
sites, \cite{Barabasi}. The preferential attachment principle is usually realised by
setting the probability of a link between the new vertex $v'$ and an old vertex $v$ to 
be an increasing function of the degree of $v$ (the number of neighbours of $v$).
This scheme can be adapted to affiliation networks. In an affiliation network 
 vertices are linked to attributes  and two vertices are declared adjacent 
whenever they share a common attribute. For example two customers of an internet shop
are called adjacent if they have bought the same or similar items. 
Here the preferred attachment 
principle means that a newly arrived customer is linked preferentially to already highly popular 
items, thus, further increasing their popularity.
 In the present study we show that a preferred attachment model of an affiliation network
has a scale-free property and establish the asymptotic degree distribution.

{\bf Model.} Given $\lambda>0$ and 
integer $k>0$, let $l\ge 0$ be an integer such that $\lambda\le k+l$.
Consider an internet library which contains $w_1,\dots, w_l$ books/items at the beginning.
Every book $w_j$ is prescribed initial score $s(w_j)=1$. On the first step new books
$w_{l+1},\dots, w_{l+k}$ arrive to the library, each having initial score $1$. Then the 
first customer $v_1$ visits the library and downloads  books independently at random:
a book $w$ is chosen with probability $p_{1,s(w)}=\lambda s(w)(l+k)^{-1}$.
Every book chosen by $v_1$ increases its score by one.

The collection of books of the library after  $n$ steps is denoted 
$W_n=\{w_1,\dots,w_{l+nk}\}$.
On the $n+1$th step $k$ new books arrive to the library, each having initial score $1$. 
Then the 
customer $v_{n+1}$ enters the library  and downloads  books of the library independently 
at random:
a book $w$ is downloaded with probability 
\begin{displaymath}
 p_{n+1,s(w)}=\lambda s(w)(l+(n+1)k+n\lambda)^{-1}
\end{displaymath}
proportional to the score $s(w)$ of $w$.
Here $s(w)-1$ is the number of vertices from $V_n=\{v_1,\dots, v_n\}$ 
that have downloaded the book $w$. 
Every book chosen by $v_{n+1}$ increases its score by one. 

We may  interpret  books as bins. Each newly arrived bin contains a single ball. 
A new customer $v_{n+1}$
throws balls into  
bins $w_1,\dots, w_{l+(n+1)k}$ at random:
each bin $w$ receives a ball with probability $p_{n+1,s(w)}$ 
and independently of the other bins.
The score $s(w)$ counts the (current) number of balls in the bin $w$. 
This number may increase with $n$. 
It measures the popularity (attractiveness) of the book $w$. Hence, 
popular books  have  higher chances to be chosen.

We call customers $v_s$ and $v_t$ adjacent if some book has been downloaded by both of them.
We are interested in the graph $G_n$ on the vertex set $V_n$ 
defined by this 
adjacency relation.

\medskip

{\bf Results.} In the present note we address the question about 
the degree sequence of $G_n$.
We shall show that for every $i=0,1,\dots$, the number of vertices $v\in V_n$
 of $G_n$ having degree
$d(v)=i$  converges to a limit and identify this limit. Namely, we have
as $n\to+\infty$ 
\begin{eqnarray}\nonumber
\frac{\#\{v\in V_n: \, d(v)=i\}}{n}
&\to&
(1+\alpha)
\E {\mathbb I}_{\{Z\le i,\Lambda\ge 1 \}}
\frac{\Gamma(i+2\Lambda)}{\Gamma(i+2\Lambda+\alpha+2)}
\frac{\Gamma(Z+2\Lambda+\alpha+1)}{\Gamma(Z+2\Lambda)},
\quad
i\ge 1,
\\
\label{2014-01-08x1}
\frac{\#\{v\in V_n: \, d(v)=0\}}{n}
&\to&
\E{\mathbb I}_{\{Z=0\}}\frac{1+\alpha}{2\Lambda+1+\alpha}.
\end{eqnarray}
Here $\alpha=k/\lambda$.
$\Gamma$ denotes Euler's Gamma function.
 $\Lambda$ denotes a Poisson random variable 
with mean $\lambda$. $Z$ is a compound Poisson random variable 
\begin{equation}
 Z=\sum_{i=1}^{\Lambda}T_i,
\end{equation}
where  $T_1, T_2,\dots$ are independent random variables independent of $\Lambda$ and 
having the same probability 
distribution 
\begin{equation}\label{2014-01-08x2}
\PP(T_1=j)=x_{j+1},
\qquad
x_{j+1}=(1+\alpha)\Gamma(2+\alpha)\frac{\Gamma(j+1)}{\Gamma(3+\alpha+j)},
\qquad
j=0,1,2,\dots.
\end{equation}
From (\ref{2014-01-08x1}) we  find the tail behaviour of the limiting degree 
distribution. Let $y_i$ denote the quantity on the right hand side of (\ref{2014-01-08x1}).
We have as $i\to+\infty$
\begin{equation}\label{2014-01-13+1}
y_i\sim \lambda(1+\alpha)^2\Gamma(2+\alpha)i^{-2-\alpha}\ln i.
\end{equation}
Here and below we write $z_i\sim q_i$ whenever $z_i/q_i\to 1$ as $i\to+\infty$.

Numbers $x_i$ have interesting interpretation. They are
 limits of the fractions of the number of books having score $i$:
\begin{equation}\label{2014-01-09x}
\lim_{n\to+\infty}\frac{\#\{w\in W_n: \, s(w)=i\}}{nk}
=  
x_i,
\qquad
i=1,2,\dots.
\end{equation}
From the properties of Gamma function 
(formula (6.1.46) of \cite{AS}) we conclude that  
 the sequence
$\{x_i\}_{i\ge 1}$ obeys a power law with exponent $2+\alpha$, 
\begin{equation}\label{2014-01-16+1}
 x_i\sim (1+\alpha)\Gamma(2+\alpha)i^{-2-\alpha}
\qquad
{\text{as}}
\quad
i\to+\infty.
\end{equation}

{\bf Related work}. Results of an empirical study of an evolving coautorhip network
(an affiliation  network, where auhors are declared adjacent if they have a joint publication)
are reported in 
\cite{NewmanMartinBallKarrer2013}. The model considered in the present paper seems 
to be new. 
The idea of such a model has been suggested by Colin Cooper. 
The extra logarithmic factor in (\ref{2014-01-13+1}) indicates that the degree distribution
of the preferred attachment affiliation model has a slightly heavier tail in comparison
to that of the related 'usual' preferential attachment model, see  
 \cite{BollobasRiordan2001}, \cite{DereichMorters}, \cite{Mori2002}. 
 On the other hand,
 affiliation network models, where  power law 
scores  (\ref{2014-01-16+1}) are prescribed to items/attributes independently
of the choices of  vertices, have much heavier tails: the proportion of vertices of degree
$i$  scales as  $i^{-1-\alpha}$ as 
$i\to+\infty$, see   \cite{BloznelisDamarackas}, \cite{BloznelisKaronski}.
An important property of real affiliation networks is that they admit a non-vanishing
clustering coefficient, \cite{Newman+W+S2002}. Clustering characteristics of the preferred
 attachment affiliation 
model will be considered elsewhere.

\medskip

The paper is organized as follows. A heuristic argument explaining (\ref{2014-01-08x1}) 
and (\ref{2014-01-09x}) 
is given in Section 2. A rigorous proof of  (\ref{2014-01-08x1}), (\ref{2014-01-13+1}) and
(\ref{2014-01-09x})   is given in Section 3.

\section{Heuristic}

We start with explaining formula (\ref{2014-01-09x}).
Given $n\ge 1$ and $w\in W_n$, we denote by $s_n(w)$ the score of $w$ after the $n$th step.
By $X^{(n)}_i$ we denote the number of bins $w\in W_n$ of score $s_n(w)=i$. We put $X^{(0)}_1=l$ and $X^{(0)}_i=0$, for $i\ge 2$.

Assume for a moment that for each $i$ the 
ratios $X^{(n)}_i/(nk)$ converge to some limit, say 
${\bar x}_i$, as $n\to+\infty$. So that for large $n$  we have $X^{(n)}_{i}\approx {\bar x}_ink$.
Then from the relations describing approximate behaviour of the  numbers $X^{(n)}_{i}$,
\begin{eqnarray}
\nonumber
 X^{(n+1)}_1
&
\approx
&
(X^{(n)}_1+k)(1-p_{n+1,1}),
\\
\nonumber
X^{(n+1)}_2
&
\approx
&
X^{(n)}_2(1-p_{n+1,2})+(X^{(n)}_{1}+k)p_{n+1,1},
\\
\nonumber
X^{(n+1)}_i
&
\approx
&
X^{(n)}_i(1-p_{n+1,i})+X^{(n)}_{i-1}p_{n+1,i-1},
\qquad
i=3,4,\dots, 
\end{eqnarray}
we obtain, by neglecting $O(n^{-1})$ terms, the equations
\begin{eqnarray}\nonumber
 {\bar x}_1(n+1)k
&=&
({\bar x}_1nk+k)\left(1-\frac{1}{n}\frac{1}{1+\alpha}\right),
\\
\nonumber
{\bar x}_i(n+1)k
&=&
{\bar x}_ink
\left(
1-\frac{1}{n}\frac{i}{1+\alpha}
\right)
+
{\bar x}_{i-1}k\frac{i-1}{1+\alpha},
\quad
 i\ge 2.
\end{eqnarray}
Solving these equations we arrive to the sequence $\{x_i\}_{i\ge 1}$ given by
formula (\ref{2014-01-08x2}).
We remark that $\{x_i\}_{i\ge 1}$ is a sequence of probabilities having a 
finite first moment. More precisely, we have
\begin{equation}\label{2014-01-09+1}
 \sum_{i\ge 1}x_i=1,
\qquad
\sum_{i\ge 1}ix_i=1+\alpha^{-1}.
\end{equation}
In particular, the common probability distribution of random variables $T_i$ is well defined.
We note that identities (\ref{2014-01-09+1}) are simple consequences of the well known
 properties
of the Gamma function and hypergeometric series (formulas (6.1.46), (15.1.20) of
\cite{AS}).

Next we explain (\ref{2014-01-08x1}).
 We call $w\in W_n$ and $v\in V_n$ related whenever
$w$ contains a ball produced by $v$. The number of balls produced by $v$ is 
called the activity 
of $v$. A vertex  $v\in V_n$ is called  regular in  $G_n$ 
if every vertex adjacent to $v$ 
in $G_n$ shares with $v$ a single bin. 
Introduce  event
${\cal V}_{i,r}=\{v_{n+1}$ has activity $r$,  
it has degree $i$ in $G_{n+1}$, and it is a regular vertex of $G_{n+1}\}$ and 
let $q^{(n)}_{i,r}$ denote its probability.
We observe that, given $X^{(n)}_1,X^{(n)}_2,\dots$, 
the conditional probability of the event ${\cal V}_{i,r}$
 is
\begin{equation}\label{2014-01-10+2}
q^{(n)}_r
\sum_{
\begin{subarray}{c}
u_1+c\dots+u_{i+1}=r,\\
1u_2+2u_3+\cdots+iu_{i+1}=i 
\end{subarray}
}
\frac{(X^{(n)}_1+k)_{u_1}(X^{(n)}_2)_{u_2}\cdots (X^{(n)}_{i+1})_{u_{i+1}}}
{(X^{(n)}_1+X^{(n)}_2+\cdots)_{r}}\frac{r!}{u_1!\cdots u_{i+1}!}+o(1).
\end{equation}
Here we use notation $(x)_u=x(x-1)\cdots(x-u+1)$, 
$u_s$ counts those bins $w\in W_{n+1}$ of score $s_n(w)=s$ that have 
received a ball from $v_{n+1}$, and  $q^{(n)}_r$ is the conditional probability, 
given $X^{(n)}_1,X^{(n)}_2,\dots$, of the event that $v_{n+1}$ has produced
$r$ balls. The remainder $o(1)$ accounts for the pobability that $v_{n+1}$ is not a 
regular vertex 
of $G_{n+1}$.

Now, using the approximations 
$X^{(n)}_i\approx x_i nk$, $i\ge 1$,
and identities (\ref{2014-01-09+1}) we, firstly, approximate the first fraction of 
(\ref{2014-01-10+2})
by $x_1^{u_1}\cdots x_{i+1}^{u_{i+1}}$ and, secondly, 
we approximate $q^{(n)}_r$ by the Poisson probability 
$e^{-\lambda}\lambda^r/r!$. We obtain that
\begin{displaymath}
q^{(n)}_{i,r}
\approx
e^{-\lambda}\frac{\lambda^r}{r!}
\sum_{
\begin{subarray}{c}
u_1+c\dots+u_{i+1}=r,\\
1u_2+2u_3+\cdots+iu_{i+1}=i 
\end{subarray}
}
x_1^{u_1}x_2^{u_2}\cdots x_{i+1}^{u_{i+1}}\frac{r!}{u_1!\cdots u_{i+1}!}
=
:c_{i,r}.
\end{displaymath}

Furthermore, we call $v\in V_n$  an $[i,r]$ vertex if its activity is $r$ and 
its degree in $G_n$ is 
$d(v)=i$. By $s_{i,r}(v)$ we denote the (current) number of balls contained in the 
bins related to an $[i,r]$ vertex $v$ of $G_n$. 
We note that any regular $[i,r]$ vertex $v$ of $G_n$ has 
$s_{i,r}(v)=i+2r=:s_{i,r}$.
Moreover,
the probability that $v_{n+1}$ sends a ball to a bin related to such a vertex $v$ is 
$s_{i,r}p_{n+1,1}+O(n^{-2})$.

Let $Y^{(n)}_{i}$ denote the number of regular vertices of $G_n$ of degree $d(v)=i$, 
and let 
$Y^{(n)}_{i,r}$ denote the number of regular $[i,r]$ vertices  of $G_n$.
Assume for a moment that  
for each $i$ and $r$ the ratios $Y^{(n)}_i/n$ converge to some limit, say
${\bar y}_i$, and  $Y^{(n)}_{i,r}/n$ converge to some limit, 
say ${\bar y}_{i,r}$, as $n\to+\infty$. 
So that for large $n$ we have $Y^{(n)}_i\approx {\bar y}_in$ and
$Y^{(n)}_{i,r}\approx {\bar y}_{i,r}n$.
Invoking these approximations in the relations describing 
approximate behaviour of numbers $Y^{(n)}_{i,r}$,
\begin{eqnarray}
\nonumber
 Y^{(n+1)}_{0,0}
&
\approx
&
Y^{(n)}_{0,0}+q^{(n)}_{0,0},
\\
\nonumber
Y^{(n+1)}_{0,r}
&
\approx
&
Y^{(n)}_{0,r}(1-s_{0,r}p_{n+1,1})+q^{(n)}_{0,r},
\qquad
r\ge 1, 
\\
\nonumber
Y^{(n+1)}_{i,r}
&
\approx
&
Y^{(n)}_{i,r}(1-s_{i,r}p_{n+1,1})
+
Y^{(n)}_{i-1,r}s_{i-1,r}p_{n+1,1}
+
q^{(n)}_{i,r},
\qquad
i,r\ge 1.
\end{eqnarray}
we obtain, by neglecting $O(n^{-1})$ terms and using the approximation
$q^{(n)}_{i,r}\approx c_{i,r}$, the equations
\begin{eqnarray}\nonumber
 {\bar y}_{0,0}
&=&
c_{0,0},
\\
{\bar y}_{0,r}
&=&
\frac{1+\alpha}{1+\alpha+2r}c_{0,r},
\qquad
r\ge 1,
\\
{\bar y}_{i,r}
&=&
\frac{2r+i-1}{1+\alpha+2r+i}{\bar y}_{i-1,r}
+
\frac{1+\alpha}{1+\alpha+2r+i}c_{i,r},
\qquad
i,r\ge 1.
\end{eqnarray}
Solving these equations we arrive to the sequence $\{y_{0,0}, \, y_{i,r}, i\ge 0, r\ge 1\}$
 given by
the formulas
\begin{eqnarray}
 y_{0,0}
&=&
c_{0,0},
\\
\label{2014-01-13X1}
y_{i,r}
&=&
(1+\alpha)
\sum_{j=0}^i
\frac{(2r+i-1)_{i-j}}{(1+\alpha+2r+i)_{i-j+1}}c_{j,r}.
\end{eqnarray}
Next we use the identity 
$c_{j,r}=\PP(Z=j,\Lambda=r)=\E {\mathbb I}_{\{\Lambda=r\}}{\mathbb I}_{\{Z=j\}}$
and write (\ref{2014-01-13X1}) in the form
\begin{eqnarray}
\nonumber
y_{i,r}
&=&
(1+\alpha)
\E {\mathbb I}_{\{\Lambda=r\}}{\mathbb I}_{\{Z\le i\}}
\frac{(2\Lambda+i-1)_{i-Z}}{(1+\alpha+2\Lambda+i)_{i-Z+1}}.
\end{eqnarray}
Hence we obtain, for $i\ge 1$,
\begin{eqnarray}
\nonumber
y_i
&=&
\sum_{r\ge 1}y_{i,r}
\\
\nonumber
&=&
(1+\alpha)
\E 
{\mathbb I}_{\{\Lambda\ge 1 \}}
{\mathbb I}_{\{Z\le i\}}
\frac{(i+2\Lambda-1)_{i-Z}}{(i+2\Lambda+\alpha+1)_{i-Z+1}}
\\
\nonumber
&=&
(1+\alpha)
\E 
{\mathbb I}_{\{\Lambda\ge 1 \}}
{\mathbb I}_{\{Z\le i\}}
\frac{\Gamma(i+2\Lambda)}{\Gamma(i+2\Lambda+\alpha+2)}
\frac{\Gamma(Z+2\Lambda+\alpha+1)}{\Gamma(Z+2\Lambda)},
\end{eqnarray}
and 
\begin{displaymath}
 y_0
=
\sum_{r\ge 0}y_{0,r}
=
\PP(\Lambda=0)+
\E 
{\mathbb I}_{\{\Lambda\ge 1 \}}
{\mathbb I}_{\{Z= 0\}}
\frac{1+\alpha}{2\Lambda+\alpha+1}
=
\E 
{\mathbb I}_{\{Z= 0\}}
\frac{1+\alpha}{2\Lambda+\alpha+1}.
\end{displaymath}
We remark that these  identities imply (\ref{2014-01-08x1}), because for every $i\ge 0$,
the number of non 
regular vertices of $G_n$ of degree $i$  can be shown to be negligible.


\section{Appendix}

Let ${\tilde Y}^{(n)}_{i,r}$ denote the number of non regular $[i,r]$
vertices of $G_n$.

\begin{proof}[Proof of (\ref{2014-01-08x1}), (\ref{2014-01-13+1}), (\ref{2014-01-09x})]
Let us prove (\ref{2014-01-08x1}).
Let $S_n$ denote the total number of balls in the network after the $n$-th step.
A simple induction argument shows that $\E S_n=l+nk+n\lambda$.
Let $\mathring{Y}^{(n)}_r$ denote the number of vertices $v\in V_n$ with 
activity at least $r$. We observe that for any $0<\varepsilon<1$ 
\begin{equation}\label{2014-01-11+++1}
\sup_n\PP(n^{-1}\mathring{Y}^{(n)}_r > \varepsilon)\to 0
\end{equation}
as $r\to \infty$. Indeed,  vertices of $V_n$ with activity at least
$r$ contribute at least $r\mathring{Y}^{(n)}_r$ balls to $S_n$. Hence,
$\mathring{Y}^{(n)}_r\le r^{-1}S_n$ and we obtain (\ref{2014-01-11+++1}), 
by Markov's inequality.
Now (\ref{2014-01-08x1}) follows from 
(\ref{2014-01-11+++1}) and the fact 
that $n^{-1}{\tilde Y}^{(n)}_{i,r}\to 0$ and
$n^{-1}Y^{(n)}_{i,r}\to y_{i,r}$ in probability as $n\to+\infty$
for $(i,r)=(0,0)$ and $i\ge 0$, $r\ge 1$. This fact follows from  Lemma \ref{y0}:  
we have $n^{-1}\E {\tilde Y}^{(n)}_{i,r}\to 0$, $n^{-1}\E Y^{(n)}_{i,r}\to y_{i,r}$ and
$\Var (n^{-1}Y^{(n)}_{i,r})\to 0$.

Relation (\ref{2014-01-09x}) follows from (\ref{2014-01-13++1}): we have
$(nk)^{-1}\E X^{(n)}_{i}\to x_{i}$ and
$\Var ((nk)^{-1}X^{(n)}_{i})\to 0$.

Let us prove (\ref{2014-01-13+1}). Since the Poisson random variable
$\Lambda$ is highly concentrated around its (finite) mean,  we can approximate 
 with a  high probability for $i,z\to+\infty$
\begin{displaymath}
 \frac{\Gamma(i+2\Lambda)}{\Gamma(i+2\Lambda+\alpha+2)}\approx i^{-2-\alpha},
\qquad
\frac{\Gamma(z+2\Lambda+\alpha+1)}{\Gamma(z+2\Lambda)}\approx z^{1+\alpha}.
\end{displaymath}
Hence, we obtain $y_i\sim (1+\alpha)\E Z^{1+\alpha}{\mathbb I}_{\{Z\le i\}}$ as $i\to+\infty$.
Next, to the randomly stopped sum $Z$ of independent random variables $T_i$
we apply the relation $\PP(Z>t)\sim \PP(T_1>t)\E \Lambda$, \cite{Foss}. We obtain
$\PP(Z>t)\sim \lambda \Gamma(2+\alpha)t^{-1-\alpha}$. The latter
relation implies 
$\E Z^{1+\alpha}{\mathbb I}_{\{Z\le i\}}\sim \lambda(1+\alpha)\Gamma(2+\alpha)\ln i$ 
for $i+\infty$. We have arrived to (\ref{2014-01-13+1}).
\end{proof}

The remaining part of the section contains auxiliary lemmas.

We write for short  $p_{n+1,s}=p_s=s\varkappa_n$, where 
\begin{equation}\label{2014-01-18++1}
\varkappa_n
=
p_{n+1,1}
=
\frac{1}{n}\frac{1}{1+\alpha}
\left(
1-\frac{1}{n}\frac{\alpha+\beta}{1+\alpha+n^{-1}\alpha+n^{-1}\beta}
\right),
\qquad
\beta:=\frac{l}{\lambda}.
\end{equation}
Denote
\begin{eqnarray}\nonumber
x^{(n)}_i
&=&
(nk)^{-1}\E X^{(n)}_i,
\qquad
y^{(n)}_{i,r}
=
n^{-1}\E Y^{(n)}_{i,r},
\qquad
{\tilde y}^{(n)}_{i,r}
=
n^{-1}\E {\tilde Y}^{(n)}_{i,r},
\\
\nonumber
h_{i,j}^{(n)}
&=&
(nk)^{-1}
\left(
\E X_i^{(n)}X_j^{(n)}-\E X_i^{(n)}\E X_j^{(n)}
\right),
\quad \,
g^{(n)}_{i,j;r}
=
n^{-2}
\left(
\E Y^{(n)}_{i,r}Y^{(n)}_{j,r}-\E Y^{(n)}_{i,r}\E Y^{(n)}_{j,r}
\right).
\end{eqnarray}

\begin{lem} \label{x00} For any $i,j\ge 1$ we have as $n\to+\infty$
\begin{equation}\label{2014-01-13++1}
 x^{(n)}_i=x_i+O(n^{-1}),
\qquad
(nk)^{-2}\E X^{(n)}_iX^{(n)}_j=x_ix_j+O(n^{-1}).
\end{equation}
Moreover,  the finite limits 
\begin{equation}\label{limitHij}
 h_{i,j}=\lim_nh_{i,j}^{(n)},
\qquad
i,j\ge 1,
\end{equation}
exist and can be calculated using the recursive relations
\begin{eqnarray}\label{Hii}
&&
h_{i,i}=\frac{2(i-1)h_{i,i-1}+ix_i+(i-1)x_{i-1}}{i+i+1+\alpha},
\\
\label{Hii+1}
&&
h_{i,i+1}=
\frac{(i-1)h_{i-1,i+1}+ih_{i,i}-ix_i}{i+(i+1)+1+\alpha},
\\
\label{Hir}
&&
 h_{i,r}=\frac{(i-1)h_{i-1,r}+(r-1)h_{i,r-1}}{i+r+1+\alpha},
\qquad
r\ge i+2.
\end{eqnarray}
In particular, we have for every $i,j\ge 1$,
\begin{equation}\label{zijh}
 (nk)^{-2}\E X^{(n)}_iX^{(n)}_j=x_i^{(n)}x_j^{(n)}+h_{i,j}(nk)^{-1}+o(n^{-1}).
\end{equation}
Here we use notation $x_0\equiv 0$ and $h_{i,j}\equiv 0$, for $\min\{i,j\}=0$.
\end{lem}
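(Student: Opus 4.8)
The plan is to write down exact one-step recursions for the first and second moments of the score counts $X^{(n)}_i$ by conditioning on the $\sigma$-algebra $\mathcal{F}_n$ generated by the first $n$ steps, and then to extract the stated asymptotics from these recursions by a discrete Gr\"onwall (deterministic-approximation) argument: an induction on $i$ for the first moments and an induction on $i+j$ for the second moments. Throughout I abbreviate $p_s=p_{n+1,s}=s\varkappa_n$ with $\varkappa_n$ as in (\ref{2014-01-18++1}), so that $n\varkappa_n=(1+\alpha)^{-1}+O(n^{-1})$, and write $\hat X^{(n)}_1=X^{(n)}_1+k$, $\hat X^{(n)}_i=X^{(n)}_i$ for $i\ge2$, for the counts just after the $k$ new bins of step $n+1$ arrive. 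If $D_s$ denotes the number of score-$s$ bins that receive a ball from $v_{n+1}$, then $X^{(n+1)}_1=\hat X^{(n)}_1-D_1$ and $X^{(n+1)}_i=\hat X^{(n)}_i-D_i+D_{i-1}$ for $i\ge2$, and, conditionally on $\mathcal{F}_n$, the $D_s$ are independent with $D_s\sim\Bin(\hat X^{(n)}_s,p_s)$.

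\textbf{First moments.} Taking conditional, then full, expectations gives $\E X^{(n+1)}_i=(1-p_i)\E\hat X^{(n)}_i+p_{i-1}\E\hat X^{(n)}_{i-1}$. Dividing by $(n+1)k$ and expanding in powers of $n^{-1}$ turns this, for $i\ge2$, into
\[
 x^{(n+1)}_i=x^{(n)}_i\Bigl(1-\frac{1+\alpha+i}{n(1+\alpha)}\Bigr)+\frac1n\,\frac{i-1}{1+\alpha}\,x^{(n)}_{i-1}+O(n^{-2}),
\]
with a harmless additional $O(n^{-2})$ at $i=2$ from the $+k$ in $\hat X^{(n)}_1$, while for $i=1$ one has the self-contained recursion $x^{(n+1)}_1=\frac{n}{n+1}(1-\varkappa_n)x^{(n)}_1+\frac{1-\varkappa_n}{n+1}$, forcing $x_1=(1+\alpha)/(2+\alpha)$. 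Inducting on $i$ and invoking the elementary fact that $a_{n+1}=a_n(1-c/n)+b_n/n+O(n^{-2})$ with $c>0$ and $b_n\to b$ implies $a_n\to b/c$, one obtains $x^{(n)}_i=x_i+O(n^{-1})$, where $x_i(1+\alpha+i)=(i-1)x_{i-1}$ for $i\ge2$; this is exactly the sequence (\ref{2014-01-08x2}) found heuristically in Section~2.

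\textbf{Second moments.} Using $X^{(n+1)}_i=\hat X^{(n)}_i-D_i+D_{i-1}$ and the conditional independence of the $D_s$,
\[
 \E\bigl[X^{(n+1)}_iX^{(n+1)}_j\mid\mathcal{F}_n\bigr]=\E[X^{(n+1)}_i\mid\mathcal{F}_n]\,\E[X^{(n+1)}_j\mid\mathcal{F}_n]+\Cov\bigl(X^{(n+1)}_i,X^{(n+1)}_j\mid\mathcal{F}_n\bigr),
\]
where the conditional covariance vanishes for $|i-j|\ge2$ and otherwise is a signed combination of $\hat X^{(n)}_ip_i(1-p_i)$ and $\hat X^{(n)}_{i-1}p_{i-1}(1-p_{i-1})$ (for instance $-\hat X^{(n)}_ip_i(1-p_i)$ when $j=i+1$); each such term is $O(1)$ since $\hat X^{(n)}_s=O(n)$ and $p_s=O(n^{-1})$. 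Because $\E[X^{(n+1)}_i\mid\mathcal{F}_n]=(1-p_i)\hat X^{(n)}_i+p_{i-1}\hat X^{(n)}_{i-1}$ is linear in $X^{(n)}$, subtracting $\E X^{(n+1)}_i\,\E X^{(n+1)}_j$ converts the product of conditional means into a bilinear form in the covariances $V^{(n)}_{a,b}:=\E X^{(n)}_aX^{(n)}_b-\E X^{(n)}_a\E X^{(n)}_b$, $a\in\{i-1,i\}$, $b\in\{j-1,j\}$. This yields a closed recursion for $V^{(n)}_{i,j}$ whose leading coefficient is $(1-p_i)(1-p_j)\le1$ and whose remaining contributions are $O(1)$, so a crude estimate gives $V^{(n)}_{i,j}=O(n)$; combined with $x^{(n)}_ix^{(n)}_j=x_ix_j+O(n^{-1})$ this already gives both claims of (\ref{2014-01-13++1}) via $(nk)^{-2}\E X^{(n)}_iX^{(n)}_j=x^{(n)}_ix^{(n)}_j+(nk)^{-1}h^{(n)}_{i,j}$.

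To locate the limit, divide the $V^{(n)}_{i,j}$-recursion by $(n+1)k$, insert $p_s=\frac{s}{n(1+\alpha)}(1+O(n^{-1}))$ and $\E\hat X^{(n)}_s=nkx_s+O(1)$, and absorb the $h^{(n)}_{i-1,j-1}$-term into $O(n^{-2})$, obtaining
\[
 h^{(n+1)}_{i,j}=h^{(n)}_{i,j}\Bigl(1-\frac{i+j+1+\alpha}{n(1+\alpha)}\Bigr)+\frac1n\Bigl(\frac{j-1}{1+\alpha}h^{(n)}_{i,j-1}+\frac{i-1}{1+\alpha}h^{(n)}_{i-1,j}+\frac1k\E\bigl[\Cov(X^{(n+1)}_i,X^{(n+1)}_j\mid\mathcal{F}_n)\bigr]\Bigr)+O(n^{-2}).
\]
Inducting on $i+j$ (with $h_{a,b}\equiv0$ when $\min\{a,b\}=0$; note $h_{i,i+1}$ needs only $h_{i,i}$ and $h_{i-1,i+1}$, both of smaller index sum), the source converges, and the same Gr\"onwall lemma gives $h^{(n)}_{i,j}\to h_{i,j}$, equal to the limiting source divided by $\frac{i+j+1+\alpha}{1+\alpha}$. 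Evaluating $\E[\Cov(\cdots\mid\mathcal{F}_n)]\to\frac{k}{1+\alpha}(ix_i+(i-1)x_{i-1})$ when $i=j$, $\to-\frac{k}{1+\alpha}\,ix_i$ when $j=i+1$, and $\to0$ when $j\ge i+2$, and using the symmetry $h_{i,j}=h_{j,i}$, reproduces (\ref{Hii}), (\ref{Hii+1}), (\ref{Hir}); and (\ref{zijh}) is then immediate. The main obstacle is purely bookkeeping: getting the conditional second-moment identity exactly right --- in particular the near-diagonal conditional covariances at $j=i\pm1$ and the $+k$ for the score-$1$ class --- and propagating the $O(n^{-1})$ error terms uniformly in $n$ through the normalization so that the Gr\"onwall argument applies at the finer, covariance scale; no idea beyond the deterministic-approximation scheme is needed.
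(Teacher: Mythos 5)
Your proposal is correct and follows essentially the same route as the paper: one-step recursions for the normalized first and second moments, solved by induction together with the discrete Gr\"onwall lemma (Lemma~\ref{L1}); your conditional-covariance computation via the independent binomials $D_s$ is precisely the ``straightforward calculation'' behind the paper's displays (\ref{hii})--(\ref{hir}), and your limiting equations reproduce (\ref{Hii})--(\ref{Hir}) exactly. The only cosmetic difference is the induction ordering for the second moments (you induct on $i+j$, the paper sweeps $r=i,i+1,\dots$ for each fixed $i$), which is immaterial since both orderings respect the dependency structure of the recursion.
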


\begin{proof}[Proof of Lemma \ref{x00}] 
Let us prove the first relation of (\ref{2014-01-13++1}). The identities
\begin{eqnarray}\nonumber
\E X^{(n+1)}_1
&=&
(1-p_1)\E (X^{(n)}_1+k),
\\
\nonumber
\E X^{(n+1)}_2
&=&
(1-p_2)\E X_2^{(n)}+p_1\E (X_1^{(n)}+k),
\\
\nonumber
\E X^{(n+1)}_i
&=&
(1-p_i)\E X_i^{(n)}+p_{i-1}\E X_{i-1}^{(n)},
\quad
i\ge 1,
\end{eqnarray}
imply
\begin{eqnarray}\label{2014-01-13++3}
x^{(n+1)}_1
&=&
x^{(n)}_1(1-n^{-1}-p_1)+n^{-1}+O(n^{-2}), 
\\
\label{2014-01-13++4}
x^{(n+1)}_i
&=&
x^{(n)}_i(1-n^{-1}-p_i)+x^{(n)}_{i-1}p_{i-1}+O(n^{-2}),
\quad
i\ge 1. 
\end{eqnarray}
Relation (\ref{2014-01-13++3}) combined with Lemma \ref{L1} implies
$x^{(n)}_1=x_1+O(n^{-1})$.
For $i\ge 2$ we proceed recursively: using the fact that  $x^{(n)}_{i-1}=x_{i-1}+O(n^{-1})$
we conclude from (\ref{2014-01-13++4}) by Lemma \ref{L1} that 
$x^{(n)}_{i}=x_{i}+O(n^{-1})$.

Next, we observe that the second relation of (\ref{2014-01-13++1}) follows 
from (\ref{zijh}). Furthermore, (\ref{zijh}) 
follows from (\ref{Hii}), (\ref{Hii+1}) and (\ref{Hir}). Hence we only need to
prove (\ref{Hii}), (\ref{Hii+1}) and (\ref{Hir}).

For convenience we write $h^{(n)}_{i,j}\equiv 0$, for $\min\{i,j\}=0$. We also 
put $x_0^{(n)}\equiv 0$.  Clearly, 
$h^{(n)}_{i,j}=h^{(n)}_{j,i}$ for $i,j\ge 0$.

Let us prove (\ref{Hii}). 
A  straightforward calculation shows that
\begin{eqnarray}
\label{hii}
h_{i,i}^{(n+1)}
\frac{n+1}{n}
&=&
h_{i,i}^{(n)}(1-p_i)^2
+
h_{i,i-1}^{(n)}2(1-p_i)p_{i-1}
\\
\nonumber
&+&
x_{i-1}^{(n)}(p_{i-1}-p_{i-1}^2)
+
x_{i}^{(n)}(p_i-p_i^2)\frac{n}{n+1}
+O(n^{-2}),
\\
\label{hii+1}
h_{i,i+1}^{(n+1)}
\frac{n+1}{n}
&=&
h_{i,i+1}^{(n)}(1-p_i)(1-p_{i+1})
+
h_{i-1,i+1}^{(n)}p_{i-1}(1-p_{i+1})
\\
\nonumber
&+&
h_{i,i}^{(n)}p_{i}(1-p_{i})
-
x_i^{(n)}p_i(1-p_i)
+
O(n^{-2}),
\end{eqnarray}
and, for  $r\ge 2+i$, 
\begin{eqnarray}
\label{hir}
h_{i,r}^{(n+1)}\frac{n+1}{n}
&=&
h_{i,r}^{(n)}(1-p_i)(1-p_r)
+
h_{i-1,r}^{(n)}p_{i-1}(1-p_r)
\\
\nonumber
&+&
h_{i,r-1}^{(n)}p_{r-1}(1-p_i)+O(n^{-2}).
\end{eqnarray}

We note that (\ref{hii}) and
Lemma \ref{L1} imply that the sequence $\{h_{1,1}^{(n)}\}_{n\ge 1}$ converges to 
$h_{1,1}$ defined by (\ref{Hii}). Furthermore, using the fact that (\ref{limitHij}) holds
for $i=j=1$
we obtain from (\ref{hii+1}) and Lemma \ref{L1} that $\{h_{1,2}^{(n)}\}_{n\ge 1}$ 
converges to 
$h_{1,2}$ defined by (\ref{Hii+1}). Next, for $i=1$ and $r=3,4,\dots$, 
we proceed recursively: 
using  (\ref{hir}) and Lemma \ref{L1} we establish
(\ref{limitHij}), with $h_{ir}$ given by (\ref{Hir}). In this way we prove the lemma for
$i=1$ and $r\ge i$.

The case $i=2$, $r\ge i$ is treated similarly. For $i=r=2$ we apply (\ref{hii}) and
Lemma \ref{L1}. For $i=2$ and $r=3$ we apply (\ref{hii+1}) and
Lemma \ref{L1}. Finally, for $i=2$ and $r\ge i+2$ we apply (\ref{Hir}) and Lemma \ref{L1}.
 
Next we proceed recursively and prove the lemma 
for $\{(i, r), r=i, r=i+1, r=i+2,\dots\}$, $i=3,4,\dots$.
\end{proof}

\begin{lem} \label{y0} Let $i,j=0,1,\dots$ and $r=1,2\dots$. 
We have as $n\to+\infty$
\begin{equation}\label{12-23+6}
y^{(n)}_{i,r}\to y_{i,r},
\qquad
 g^{(n)}_{i,j;r} \to 0,
\qquad
{\tilde y}^{(n)}_{i,r}\to 0.  
\end{equation}
(\ref{12-23+6}) remains valid for $i=j=r=0$. 
\end{lem}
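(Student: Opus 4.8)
The plan is to establish the three claims of Lemma~\ref{y0} by the same moment-recursion technique used for Lemma~\ref{x00}, i.e.\ by deriving exact one-step identities for the expectations $\E Y^{(n)}_{i,r}$, $\E \tilde Y^{(n)}_{i,r}$ and for the second mixed moments $\E Y^{(n)}_{i,r}Y^{(n)}_{j,r}$, normalising them, and feeding the resulting approximate recursions into Lemma~\ref{L1} (the elementary convergence lemma for sequences satisfying $a_{n+1}=a_n(1-c/n)+b/n+o(1/n)$). First I would condition on the configuration after the $n$-th step. When customer $v_{n+1}$ arrives, the bins related to a fixed regular $[i,r]$ vertex $v$ hold exactly $s_{i,r}=i+2r$ balls, so the probability that $v$ gains a new neighbour through one of these bins is $s_{i,r}\varkappa_n+O(n^{-2})$; a regular $[i-1,r]$ vertex becomes a regular $[i,r]$ vertex with probability $s_{i-1,r}\varkappa_n+O(n^{-2})$; and with probability $c_{i,r}+o(1)$ the new vertex itself is born a regular $[i,r]$ vertex. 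Collecting these contributions gives
\begin{displaymath}
\E Y^{(n+1)}_{i,r}=\E Y^{(n)}_{i,r}\bigl(1-s_{i,r}\varkappa_n\bigr)+\E Y^{(n)}_{i-1,r}\,s_{i-1,r}\varkappa_n+c_{i,r}+o(1),
\end{displaymath}
which after dividing by $n+1$ and using $\varkappa_n=\tfrac{1}{n(1+\alpha)}+O(n^{-2})$ becomes precisely the recursion whose fixed point is the announced $y_{i,r}$. An induction on $i$ (with $r$ fixed, base case $i=0$ handled by the $\bar y_{0,0}$ and $\bar y_{0,r}$ formulas) then yields $y^{(n)}_{i,r}\to y_{i,r}$ via Lemma~\ref{L1}, the input $y^{(n)}_{i-1,r}\to y_{i-1,r}$ being available from the previous step.

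For the variance statement I would write the analogous two-step identity for $\E Y^{(n+1)}_{i,r}Y^{(n+1)}_{j,r}$. The events that $v_{n+1}$ affects the status of two distinct old vertices $v,v'$ are \emph{not} independent, since they may both send a ball into the same bin or into bins whose scores are coupled; however each such coincidence costs an extra factor $O(n^{-1})$, so up to $o(n^{-2})$ the cross term factorises and one gets
\begin{displaymath}
g^{(n+1)}_{i,j;r}\frac{n+1}{n}=g^{(n)}_{i,j;r}\bigl(1-(s_{i,r}+s_{j,r})\varkappa_n\bigr)+(\text{lower-index }g\text{ terms})+O(n^{-2}),
\end{displaymath}
a homogeneous recursion of the type $g^{(n+1)}=g^{(n)}(1-c/n)+o(1/n)$ with $c>0$, whence $g^{(n)}_{i,j;r}\to 0$ by Lemma~\ref{L1} and a secondary induction on $i+j$ (the diagonal $i=j$ needs the $+\,O(n^{-2})$ self-interaction term, exactly as (\ref{hii}) needed the $x^{(n)}_i(p_i-p_i^2)$ term). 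Combined with $y^{(n)}_{i,r}\to y_{i,r}$ this gives $\Var(n^{-1}Y^{(n)}_{i,r})\to 0$, which is what Section~3 invokes. Finally, for $\tilde y^{(n)}_{i,r}\to 0$: a vertex is non-regular only if two of its neighbours were acquired through the \emph{same} bin, equivalently if at the moment $v$ was created some bin related to $v$ already contained, or later received, at least two balls from $V_n\setminus\{v\}$ relevant to $v$'s degree count. Such a double event has conditional probability $O(n^{-2})$ per step, and summing the $\approx n$ steps and the $O(n)$ candidate vertices against the geometric decay produced by the $(1-s_{i,r}\varkappa_n)$ survival factors gives $\E\tilde Y^{(n)}_{i,r}=o(n)$; more cleanly, $\tilde Y^{(n)}_{i,r}$ is dominated by the number of pairs (bin, vertex pair) sharing two balls, whose expectation is $O(1)$ by a direct second-moment bound on the scores using Lemma~\ref{x00}.

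The main obstacle is bookkeeping the error terms in the second-moment recursion: one must verify that \emph{all} the ways two old vertices can interact through a single new customer—shared bins, and the subtler effect that conditioning on one vertex's ball-placements perturbs the score distribution seen by the other—contribute only $o(n^{-2})$ after normalisation, so that the recursion for $g^{(n)}_{i,j;r}$ really is homogeneous up to $o(1/n)$. This is the analogue of the $O(n^{-2})$ remainders already asserted (without detailed proof) in (\ref{hii})--(\ref{hir}) of Lemma~\ref{x00}, and I would handle it the same way: bound the number of ``bad'' coincidences by the total ball count $S_n$ (whose expectation is linear in $n$ by the induction at the start of the Appendix proof) times $\varkappa_n^2=O(n^{-2})$, and absorb everything into the $O(n^{-2})$ slack. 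The remaining inductions on $i$ and on $i+j$, and the application of Lemma~\ref{L1}, are then routine.
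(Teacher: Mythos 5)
Your proposal follows essentially the same route as the paper: derive one-step recursions for $\E Y^{(n)}_{i,r}$, for the covariances $g^{(n)}_{i,j;r}$, and for $\E{\tilde Y}^{(n)}_{i,r}$, then feed them into Lemma~\ref{L1} with inductions on $i$ (and on the pair $(i,j)$ for the second moments), which is exactly the structure of the paper's proof (which likewise defers the technical verification of the recursions to an extended version). The only nit is the normalisation factor in the covariance recursion, which should be $\left(\tfrac{n+1}{n}\right)^2$ rather than $\tfrac{n+1}{n}$ since $g^{(n)}_{i,j;r}$ carries an $n^{-2}$ scaling; this does not affect the conclusion because the resulting coefficient $b$ in Lemma~\ref{L1} stays positive.
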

\begin{proof}[Proof of Lemma  \ref{y0}]

For $i,j,r\ge 1$ we show that
\begin{eqnarray}\label{2014-01-25+1}
{\tilde y}^{(n)}_{0,0}
&\equiv&
{\tilde y}^{(n)}_{0,r}
\equiv
{\tilde y}^{(n)}_{i,1},
\\
\label{2014-01-25+2}
{\tilde y}^{(n+1)}_{i+1,r}
&\le&
{\tilde y}^{(n)}_{i+1,r}(1-n^{-1})
+
{\tilde y}^{(n)}_{i,r}s_{ir,r}\varkappa_n
+
o(n^{-1}),
\\
\label{2014-01-07x1}
 y^{(n+1)}_{0,0}
&=&
(1-n^{-1})y^{(n)}_{0,0}+n^{-1}c_{0,0}+o(n^{-1}),
\\
\label{2014-01-07x2}
 y^{(n+1)}_{0,r}
&=&
\left(1-n^{-1}-s_{0,r}\varkappa_n\right)y^{(n)}_{0,r}
+
n^{-1}c_{0,r}+o(n^{-1}),
\\
\label{2014-01-07x3}
y^{(n+1)}_{i,r}
&=&
\left(1-n^{-1}-s_{i,r}\varkappa_n)\right)
y^{(n)}_{i,r}
+
s_{i-1,r}\varkappa_n
y^{(n)}_{i-1,r}
+
n^{-1}c_{i,r}+o(n^{-1}),
\end{eqnarray}
and
\begin{eqnarray}
\label{2014-01-07x4}
 g^{(n+1)}_{0,0;0}
&=&
(1-2n^{-1})g^{(n)}_{0,0;0} 
+
o(n^{-1}),
\\
\label{2014-01-07x5}
g^{(n+1)}_{0,0;r}
&=&
(1-2n^{-1}-2s_{0,r}\varkappa_n)
g^{(n)}_{0,0;r}
+
o(n^{-1}),
\end{eqnarray}
\begin{eqnarray}
\label{2014-01-07x6}
g^{(n+1)}_{0,j;r}
&=&
(1-2n^{-1}-(s_{j,r}+s_{0,r})\varkappa_n)
g^{(n)}_{0,j;r}
+
s_{j-1,r}\varkappa_n
g^{(n)}_{0,j-1;r}
+o(n^{-1}),
\\
\label{2014-01-07x7}
g^{(n+1)}_{i,j;r}
&=&
(1-2n^{-1}-(s_{i,r}+s_{j,r})\varkappa_n)
g^{(n)}_{i,j;r}
+
s_{i-1,r}\varkappa_n
g^{(n)}_{i-1,j;r}
+
s_{j-1,r}\varkappa_n
g^{(n)}_{i,j-1;r}
\\
\nonumber
&+&
o(n^{-1}).
\end{eqnarray}
The proof of (\ref{2014-01-25+1})-(\ref{2014-01-07x7}) is technical.
We refer the reader to the extended version of the paper \cite{BloznelisGotzeExtended}
for details.
Here we prove that 
 (\ref{2014-01-25+1})-(\ref{2014-01-07x7}) imply (\ref{12-23+6}).

Let us prove the third relation of  (\ref{12-23+6}). For $i=0$, and for $r=0,1$ 
the relation follows from 
(\ref{2014-01-25+1}). Next, for any fixed $r\ge 2$ we proceed 
recursively: from (\ref{2014-01-25+2}) combined with the fact that
${\tilde y}^{(n)}_{i,r}\to 0$  
we conclude by Lemma \ref{L1}
that ${\tilde y}^{(n)}_{i+1,r}\to 0$.

Let us prove the first and second relation of (\ref{12-23+6}). 
Firstly, combining (\ref{2014-01-07x1}) (respectively (\ref{2014-01-07x4})) 
with Lemma \ref{L1} we obtain the first (respectively second) relation
of (\ref{12-23+6}), for $i=j=r=0$.
Secondly, combining (\ref{2014-01-07x2}) (respectively (\ref{2014-01-07x5})) 
with Lemma \ref{L1}
we obtain the first
(respectively second) relation
of (\ref{12-23+6}), for $i=j=0$, $r\ge 1$.

Now  we prove the first relation of  (\ref{12-23+6}) for  $i\ge 1$ and $r\ge 1$.
We fix $r$ and proceed recursively: from the fact that $y^{(n)}_{i-1,r}\to y_{i-1,r}$
and relation (\ref{2014-01-07x3}) we conclude by Lemma \ref{L1} that
$y^{(n)}_{i,r}\to y_{i,r}$.

Next, we prove  the second relation of (\ref{12-23+6}) for $r\ge 1$ and $i+j\ge 1$.
We fix $r$ and proceed recursively in $i$ and $j$.

For $i=0$ and $j\ge 1$ we proceed as follows: from the fact that 
$g^{(n)}_{0,j-1;r}\to 0$
and relation (\ref{2014-01-07x6}) we conclude by Lemma \ref{L1} that
$g^{(n)}_{0,j;r}\to 0$.
In this way we prove the second relation of (\ref{12-23+6}) for $(i,j)$ such that
$i=0$ and $j\ge 1$.

Now, consider indices $i=1$ and $j\ge 1$. From the fact that 
$g^{(n)}_{1,j-1;r}\to 0$
and relation (\ref{2014-01-07x6}) we conclude by Lemma \ref{L1} that
$g^{(n)}_{1,j;r}\to 0$.
In this way we prove the second relation of (\ref{12-23+6}) for $(i,j)$ such that
$i=1$ and $j\ge 2$.

Proceeding similarly we establish the second relation of (\ref{12-23+6})
for $\{(i,i), (i,i+1), (i,i+2),\dots\}$, $i=2,3,\dots$.
\end{proof}

\begin{lem}\label{L1} Let $b,h\in R$.
Let $\{b_n\}_{n\ge 1}$ be a
real sequence  converging to $b$ and assume that
the series $\sum_{n\ge 1}n^{-1}|b_n-b|$ converges. 
Let $\{h_n\}_{n\ge 1}$ be a
real sequence converging to $h$.
Let $\{a_n\}_{n\ge 1}$ be a real sequence satisfying the recurrence relation 
\begin{equation}\label{recurrence_a}
 a_{n+1}=a_n(1-n^{-1}b_n)+n^{-1}h_n,
\qquad
n\ge 1.
\end{equation}
 For  $b> 0$ we have  $a_n\to hb^{-1}$. Suppose, in addition, that
$b_n-b=O(n^{-1})$, $h_n-h=O(n^{-1})$. Then for $b\not=1$  we have
$a_n-hb^{-1}=O(n^{-1\wedge b})$, and for $b=1$ we have $a_n-hb^{-1}=O(n^{-1}\ln n)$.

Let ${\tilde b}\ge 0$.
Let $\{{\tilde a}_n\}_{n\ge 1}$, $\{{\tilde b}_n\}_{n\ge 1}$, $\{{\tilde h}_n\}_{n\ge 1}$
be non negative sequences such that ${\tilde b}_n\to {\tilde b}$, ${\tilde h}_n\to 0$
and  $\{{\tilde a}_n\}_{n\ge 1}$ satisfies the inequality
\begin{equation}\nonumber
 {\tilde a}_{n+1}
\le 
{\tilde a}_n(1-n^{-1}{\tilde b}_n)+n^{-1}{\tilde h}_n,
\qquad
n\ge 1.
\end{equation}
 Assume that the series $\sum_{n\ge 1}n^{-1}|{\tilde b}_n-{\tilde b}|$ converges. 
Then $\{{\tilde a}_n\}_{n\ge 1}$ converges to $0$.
\end{lem}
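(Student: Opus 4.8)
The plan is to treat the three assertions in turn, all by the same device: unrolling the linear recurrence into a product and comparing with a telescoping reference solution.

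First, for the convergence statement $a_n\to hb^{-1}$ when $b>0$, I would set $e_n=a_n-hb^{-1}$ and derive from (\ref{recurrence_a}) the relation
\begin{equation}\nonumber
 e_{n+1}=e_n(1-n^{-1}b_n)+n^{-1}\bigl((h_n-h)+b^{-1}(b-b_n)h\bigr).
\end{equation}
Writing $P_{m,n}=\prod_{j=m}^{n-1}(1-j^{-1}b_j)$, one gets $e_n=e_{m}P_{m,n}+\sum_{j=m}^{n-1}j^{-1}r_jP_{j+1,n}$ with $r_j=(h_j-h)+b^{-1}(b-b_j)h$. Since $b_j\to b>0$, for large $j$ the factors $1-j^{-1}b_j$ are in $(0,1)$ and $\log P_{m,n}=-\sum_{j=m}^{n-1}b_j/j+O(\sum j^{-2})\to-\infty$, so $P_{m,n}\to0$ as $n\to\infty$ for fixed $m$; this kills the first term. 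For the sum, I would fix $\varepsilon>0$, choose $m$ so that $|r_j|<\varepsilon$ for $j\ge m$ (possible since $h_j\to h$, $b_j\to b$), and bound $\bigl|\sum_{j=m}^{n-1}j^{-1}r_jP_{j+1,n}\bigr|\le\varepsilon\sum_{j=m}^{n-1}j^{-1}P_{j+1,n}$. The key elementary estimate is that $\sum_{j=m}^{n-1}j^{-1}P_{j+1,n}$ stays bounded (in fact is $\le b^{-1}+o(1)$), which follows from the discrete identity $j^{-1}b_j P_{j+1,n}=P_{j+1,n}-P_{j,n}$, i.e. the sum telescopes up to the slowly varying factor $b_j\approx b$. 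Hence $\limsup_n|e_n|\le\varepsilon/b$ for every $\varepsilon$, giving $e_n\to0$.

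Second, for the rate statements I would use the stronger hypotheses $b_n-b=O(n^{-1})$, $h_n-h=O(n^{-1})$, which give $r_j=O(j^{-1})$ and also $P_{m,n}=\Theta((m/n)^{b})$ with an explicit constant (from $\log P_{m,n}=-b\log(n/m)+\sum_j O(j^{-2}) + \sum_j O(j^{-2})$, the correction terms converging). Then $|e_n|\lesssim n^{-b}+\sum_{j=m}^{n-1}j^{-2}(j/n)^{b}= n^{-b}+n^{-b}\sum_{j=m}^{n-1}j^{b-2}$, and evaluating the last sum by comparison with $\int x^{b-2}dx$ splits into cases: if $b<1$ the sum is $O(n^{b-1})$, giving $|e_n|=O(n^{-1})$ wait—more carefully, $n^{-b}\cdot O(n^{b-1})=O(n^{-1})$ when $b<1$, while the standalone $n^{-b}$ term dominates, so $|e_n|=O(n^{-b})$; if $b>1$ the sum converges, again leaving the $n^{-b}=O(n^{-1})$-beaten term, so actually $|e_n|=O(n^{-1})$; and if $b=1$ the sum is $\Theta(\log n)$, giving $|e_n|=O(n^{-1}\log n)$. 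This matches the claimed $O(n^{-1\wedge b})$ for $b\neq1$ and $O(n^{-1}\log n)$ for $b=1$.

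Third, for the inequality part, nonnegativity of all sequences lets me run the same unrolling with inequalities: from $0\le{\tilde a}_{n+1}\le{\tilde a}_n(1-n^{-1}{\tilde b}_n)+n^{-1}{\tilde h}_n$ (and, for large $n$, $1-n^{-1}{\tilde b}_n\in[0,1)$ once ${\tilde b}_n$ is near ${\tilde b}\ge0$—here I split off the first finitely many terms, which only scale ${\tilde a}_n$ by a bounded factor) I obtain ${\tilde a}_n\le{\tilde a}_m{\tilde P}_{m,n}+\sum_{j=m}^{n-1}j^{-1}{\tilde h}_j{\tilde P}_{j+1,n}$ with ${\tilde P}$ the analogous product. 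If ${\tilde b}>0$ the argument above applies verbatim with $r_j$ replaced by ${\tilde h}_j\to0$, yielding ${\tilde a}_n\to0$. If ${\tilde b}=0$, the products ${\tilde P}_{m,n}$ need not vanish, but $\log{\tilde P}_{m,n}=-\sum_{j=m}^{n-1}{\tilde b}_j/j$; the convergence of $\sum_n n^{-1}|{\tilde b}_n-{\tilde b}|=\sum_n n^{-1}{\tilde b}_n$ forces $\sum_j {\tilde b}_j/j<\infty$, so ${\tilde P}_{m,n}$ converges to a positive limit $c_m$ as $n\to\infty$—still bounded—and crucially, given $\varepsilon$, one can pick $m$ with $\sum_{j\ge m}j^{-1}{\tilde b}_j<\varepsilon$, then ${\tilde a}_m{\tilde P}_{m,n}\le{\tilde a}_m$ is a fixed number and $\sum_{j=m}^{n-1}j^{-1}{\tilde h}_j{\tilde P}_{j+1,n}\le\bigl(\sup_{j\ge m}{\tilde h}_j\bigr)\sum_{j=m}^{n-1}j^{-1}{\tilde P}_{j+1,n}$. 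Here the telescoping trick degenerates (no ${\tilde b}_j$ in the denominator), so instead I bound $\sum_{j=m}^{n-1}j^{-1}{\tilde P}_{j+1,n}\le{\tilde P}_{n_0,n}^{-1}\cdot\text{(const)}$—hmm, this is the delicate point. The cleaner route when ${\tilde b}=0$: since ${\tilde h}_j\to0$, write ${\tilde a}_{n+1}\le{\tilde a}_n+n^{-1}{\tilde h}_n$, which alone only gives boundedness; so I instead keep the ${\tilde b}_j$ term and use that $\limsup{\tilde a}_n\le\limsup\bigl({\tilde a}_{n_0}{\tilde P}_{n_0,n}+\sum_{j=n_0}^{n-1}j^{-1}{\tilde h}_j\bigr)$ and let $n_0\to\infty$ after $n\to\infty$: the tail sum $\sum_{j\ge n_0}j^{-1}{\tilde h}_j$ — wait, that sum diverges. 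I expect this ${\tilde b}=0$ sub-case to be the main obstacle, and the honest fix is to exploit the summability hypothesis more sharply: for any $\delta>0$, once ${\tilde b}_j\le\delta$ fails only finitely often is false, but $\sum j^{-1}{\tilde b}_j<\infty$ means ${\tilde b}_j=o(1)$ only on average; nonetheless one shows $\liminf{\tilde b}_n=0$ is not enough either. The correct argument: let $L=\limsup{\tilde a}_n$; for large $n$, ${\tilde a}_{n+1}\le{\tilde a}_n(1-n^{-1}{\tilde b}_n)+n^{-1}{\tilde h}_n$, and summing/iterating from a point where ${\tilde a}_n\le L+\varepsilon$ and using $\prod_{j\ge N}(1-j^{-1}{\tilde b}_j)\in(0,1]$ together with ${\tilde h}_j\to0$, a standard Robbins–Siegmund-type supermartingale-free argument (or Lemma-type reference) gives that ${\tilde a}_n$ converges, and the limit must be $0$ because otherwise $\sum n^{-1}{\tilde b}_n(\liminf{\tilde a}_n)$ would have to be offset by $\sum n^{-1}{\tilde h}_n$, contradicting ${\tilde h}_n\to0$ after a Cesàro comparison. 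I would present this last piece carefully, as it is the only nontrivial point; everything else is routine product-and-telescope bookkeeping.
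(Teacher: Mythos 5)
Your treatment of the two assertions about $\{a_n\}$ is correct and is essentially the paper's argument: both proofs unroll the recurrence into $a_{n+1}=a_mP_{m,n+1}+\sum_j j^{-1}h_jP_{j+1,n+1}$ and estimate the products $P_{j+1,n}\asymp (j/n)^b$ via the logarithm. Your telescoping identity $j^{-1}b_jP_{j+1,n}=P_{j+1,n}-P_{j,n}$, which bounds $\sum_j j^{-1}P_{j+1,n}$ by roughly $b^{-1}$ without any asymptotic expansion, is a slightly cleaner route to the convergence statement than the paper's explicit computation of $\sum_i i^{b-1}n^{-b}$, but it is the same family of argument. One small slip in the rate part: $\sum_{j\le n} j^{b-2}$ converges precisely when $b<1$ and is $\Theta(n^{b-1})$ when $b>1$ (you have the two cases interchanged); since in either case the total is dominated by $n^{-b}+n^{-1}$, your final rates $O(n^{-1\wedge b})$ and $O(n^{-1}\ln n)$ are nevertheless correct.

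The genuine gap is the third assertion in the sub-case ${\tilde b}=0$, which you correctly sense is "the main obstacle" but never actually close: the Robbins--Siegmund allusion and the concluding Ces\`aro contradiction are not arguments. In fact no argument can work there, because the statement is false for ${\tilde b}=0$: take ${\tilde b}_n\equiv 0$, ${\tilde h}_n=1/\ln(n+2)\to 0$, and ${\tilde a}_{n+1}={\tilde a}_n+n^{-1}{\tilde h}_n$ with ${\tilde a}_1=0$; all hypotheses hold (the series $\sum_n n^{-1}|{\tilde b}_n-{\tilde b}|$ is identically zero), yet ${\tilde a}_{n+1}=\sum_{j\le n}\bigl(j\ln(j+2)\bigr)^{-1}\sim\ln\ln n\to\infty$. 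The lemma should be read (or restated) with ${\tilde b}>0$, which is the only case the paper ever invokes (in Lemma \ref{y0} one always has ${\tilde b}_n=1$); under that restriction your own telescoping bound, applied with $r_j$ replaced by ${\tilde h}_j\to 0$, finishes the proof exactly as in your first part, and you should simply say so rather than attempt the ${\tilde b}=0$ case. So the fix is not a new idea but the recognition that the degenerate case must be excluded.
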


The proof is straightforward, see \cite{BloznelisGotzeExtended} for details.

{\bf Acknowledgement}. M. Bloznelis thanks Katarzyna Rybarczyk for discussion.
 The work of M. Bloznelis was supported by the Research Council
of Lithuania
grant MIP-067/2013 and by the SFB 701  grant at Bielefeld university.


\end{document}